\let\coloneqq\relax
\newcolumntype{x}[1]{>{\centering\arraybackslash}p{#1}}
\newtheorem{thm}{Theorem}
\newtheorem*{thm*}{Theorem}
\newtheorem*{prop*}{Proposition}
\newtheorem{lemma}[thm]{Lemma}
\newtheorem*{lemma*}{Lemma}
\newtheorem*{cor*}{Corollary}
\newtheorem*{cj*}{Conjecture}
\newtheorem{Def}[thm]{Definition}
\newtheorem*{Def*}{Definition}
\newtheorem*{question*}{Question}
\newtheorem*{problem*}{Problem}
\def\thmhead@plain#1#2#3{%
  \thmname{#1}\thmnumber{\@ifnotempty{#1}{ }\@upn{#2}}%
  \thmnote{ {\the\thm@notefont#3}}}
\let\thmhead\thmhead@plain
\theoremstyle{definition}
\newcommand{\bb}{\begin{equation}\begin{aligned}\hspace{0pt}}
\newcommand{\bbb}{\begin{equation*}\begin{aligned}}
\newcommand{\ee}{\end{aligned}\end{equation}}
\newcommand{\eee}{\end{aligned}\end{equation*}}
\newcommand*{\coloneqq}{\mathrel{\vcenter{\baselineskip0.5ex \lineskiplimit0pt \hbox{\scriptsize.}\hbox{\scriptsize.}}} =}
\newcommand{\eqt}[1]{\stackrel{\mathclap{\scriptsize \mbox{#1}}}{=}}
\newcommand{\ketbra}[1]{\ket{#1}\!\!\bra{#1}}
\renewcommand{\epsilon}{\varepsilon}
\newcommand{\dd}{\mathrm{d}}
\newcommand{\id}{\mathds{1}}
\newcommand{\N}{\mathds{N}}
\DeclareMathOperator{\Tr}{Tr}
\DeclareMathOperator{\Span}{span}
\DeclareMathAlphabet{\pazocal}{OMS}{zplm}{m}{n}
\newcommand{\EE}{\pazocal{E}}
\newcommand{\lsmatrix}{\left(\begin{smallmatrix}}
\newcommand{\rsmatrix}{\end{smallmatrix}\right)}
\newcommand*\rel@kern[1]{\kern#1\dimexpr\macc@kerna}
\newcommand*\widebar[1]{%
  \begingroup
  \def\mathaccent##1##2{%
    \rel@kern{0.8}%
    \overline{\rel@kern{-0.8}\macc@nucleus\rel@kern{0.2}}%
    \rel@kern{-0.2}%
  }%
  \macc@depth\@ne
  \let\math@bgroup\@empty \let\math@egroup\macc@set@skewchar
  \mathsurround\z@ \frozen@everymath{\mathgroup\macc@group\relax}%
  \macc@set@skewchar\relax
  \let\mathaccentV\macc@nested@a
  \macc@nested@a\relax111{#1}%
  \endgroup
}
\tikzset{meter/.append style={draw, inner sep=10, rectangle, font=\vphantom{A}, minimum width=30, line width=.8, path picture={\draw[black] ([shift={(.1,.3)}]path picture bounding box.south west) to[bend left=50] ([shift={(-.1,.3)}]path picture bounding box.south east);\draw[black,-latex] ([shift={(0,.1)}]path picture bounding box.south) -- ([shift={(.3,-.1)}]path picture bounding box.north);}}}
\tikzset{roundnode/.append style={circle, draw=black, fill=gray!20, thick, minimum size=10mm}}
\tikzset{squarenode/.style={rectangle, draw=black, fill=none, thick, minimum size=10mm}}
\definecolor{Blues5seq1}{RGB}{239,243,255}
\definecolor{Blues5seq2}{RGB}{189,215,231}
\definecolor{Blues5seq3}{RGB}{107,174,214}
\definecolor{Blues5seq4}{RGB}{49,130,189}
\definecolor{Blues5seq5}{RGB}{8,81,156}
\definecolor{Greens5seq1}{RGB}{237,248,233}
\definecolor{Greens5seq2}{RGB}{186,228,179}
\definecolor{Greens5seq3}{RGB}{116,196,118}
\definecolor{Greens5seq4}{RGB}{49,163,84}
\definecolor{Greens5seq5}{RGB}{0,109,44}
\definecolor{Reds5seq1}{RGB}{254,229,217}
\definecolor{Reds5seq2}{RGB}{252,174,145}
\definecolor{Reds5seq3}{RGB}{251,106,74}
\definecolor{Reds5seq4}{RGB}{222,45,38}
\definecolor{Reds5seq5}{RGB}{165,15,21}
\newenvironment{boxedthm}[1]%
	{\expandafter\ifstrequal\expandafter{#1}{orange}{\begin{tcolorbox}[colback=red!15,colframe=orange!15,breakable,enhanced]}{\begin{tcolorbox}[colback=Blues5seq1,colframe=Blues5seq5,breakable,enhanced]}}%
	{\end{tcolorbox}}
\newcommand{\f}[1]{{\color{blue} #1}}
\renewcommand{\EE}[1]{\underset{\scaleobj{.8}{#1}}{\mathds{E}\,}}
\begin{document}

\title{Random purification channel \\ for passive Gaussian bosons}
\author[1]{Francesco Anna Mele}
\author[1]{Filippo Girardi}
\author[3]{Senrui Chen}
\author[2]{Marco Fanizza}
\author[1]{Ludovico~Lami}

\affil[1]{Scuola Normale Superiore, Piazza dei Cavalieri 7, 56126 Pisa, Italy}
\affil[2]{Inria, T\'el\'ecom Paris -- LTCI, Institut Polytechnique de Paris}
\affil[3]{California Institute of Technology, Pasadena, CA 91125, USA}

\date{}
\setcounter{Maxaffil}{0}
\renewcommand\Affilfont{\itshape\small}

\maketitle
\begin{abstract}
The random purification channel, which, given $n$ copies of an unknown mixed state $\rho$, prepares $n$ copies of an associated random purification, has proved to be an extremely valuable tool in quantum information theory. In this work, we construct a Gaussian version of this channel that, given $n$ copies of a bosonic passive Gaussian state, prepares $n$ copies of one of its randomly chosen Gaussian purifications. The construction has the additional advantage that each purification has a mean photon number which is exactly twice that of the initial state. Our construction relies on the characterisation of the commutant of passive Gaussian unitaries via the representation theory of dual reductive pairs of unitary groups.
\end{abstract}

\section{Introduction}

The recently introduced random purification channel~\cite{tang2025, pelecanos2025, random_pur_simple} transforms $n$ copies of an unknown mixed quantum state $\rho$ into $n$ copies of a uniformly random purification $\ket{\psi_\rho}$ of $\rho$. On the one hand, the action of this channel can be expressed in the Schur--Weyl picture in a way that makes it clearly efficiently implementable~\cite{tang2025, pelecanos2025}; on the other, it can also be seen to take a remarkably simple analytical form~\cite{random_pur_simple}, which makes several other properties immediately transparent. The random purification channel has already proved to be an incredibly useful tool, with applications ranging from quantum learning theory~\cite{Utsumi2025, pelecanos2025, AMele2025} to quantum Shannon theory~\cite{random_pur_simple}. While its action can randomly purify an arbitrary quantum state $\rho$, in many situations one has some additional information on $\rho$ that should reflect in the kind of purifications the channel prepares. For instance, the physically ubiquitous Gaussian states of bosonic systems admit purifications that can themselves be chosen to be Gaussian, although not every purification has this property. This raises the question of whether the random purifications generated by the channel can similarly be taken to be Gaussian.

In this work, we answer this question in the affirmative for the class of bosonic passive Gaussian states. A passive Gaussian state is a state that can be obtained from a product of thermal states with the application of an arbitrary passive operation, i.e.~a Gaussian unitary that preserves the mean photon number. We also establish the important physical property that the total photon number of the purification can be taken to be exactly twice that of the initial state, meaning that the energy overhead involved in the random purification channel is overall modest. 

\section{Preliminaries on bosonic systems}
Before discussing the construction of the random Gaussian purification channel for passive bosons, we briefly review the basic definitions, notions, and notation that will be used in the framework of continuous-variable systems, also known as bosonic systems. For a comprehensive treatment of the subject, we refer the reader to~\cite{BUCCO}.\\

The Hilbert space of an $m$-mode bosonic system is $\mathcal{H}_m = L^2(\mathbb{R}^m)$, and the corresponding quadrature operator vector is $\hat{\mathbf{R}} = (\hat x_1,\hat p_1,\dots,\hat x_m,\hat p_m)$, where $\hat{x}_i$ and $\hat{p}_i$ denote the position and momentum operators of the $i$-th mode. The canonical commutation relations are given by $[\hat x_j,\hat p_k]=i\delta_{jk}$ for all $j,k=1,\dots,m$. The annihilation and creation operators are defined as
\bb
    a_j = \frac{\hat x_j + i \hat p_j}{\sqrt2}, \quad a_j^\dagger = \frac{\hat x_j - i \hat p_j}{\sqrt2}
\ee
for $j = 1, \dots, m$, and they satisfy the commutation relations $[\hat a_j,\hat a^\dagger_k]=\delta_{jk}$. The photon number operator is defined as $\hat{N} \coloneqq \sum_{j=1}^m a_j^\dagger a_j$, and the mean photon number of an $m$-mode quantum state $\rho$ is defined by the expectation value $\Tr[\rho\hat{N}]$. We collect the annihilation operators into the vector $\mathbf{a}\coloneqq (a_1,\ldots,a_m)$, so that the photon number operator can also be written as $\hat{N}=\mathbf{a}^\dagger \mathbf{a}$. The Fock basis provides an orthonormal basis for $\mathcal{H}_m$ and it is defined as
\bb
    \ket{\mathbf k} = \frac{(a_{1}^\dagger)^{k_1} \dots (a_m^\dagger)^{k_m}}{\sqrt{k_1 ! \dots k_m !}} \ket{0},
\ee
where $\mathbf{k} = (k_1, \ldots, k_m) \in \mathbb{N}^m$, and $\ket{0}$ denotes the vacuum state. Given a $m$-mode bosonic state $\rho$, its first moment $\mathbf{m}(\rho)$ and covariance matrix $V(\rho)$ are defined, respectively, as
\bb
\mathbf{m}(\rho)&\coloneqq \Tr[\hat{\mathbf{R}}\rho]\,,\\
V(\rho)& \coloneqq \Tr[\{\hat{\mathbf{R}}-\mathbf{m}(\rho),(\hat{\mathbf{R}}-\mathbf{m}(\rho))^{\intercal}\}\rho]\,,
\ee
where $\{\cdot,\cdot\}$ denotes the anticommutator. The covariance matrix $V(\rho)$ is strictly positive and satisfies the uncertainty relation $V(\rho)+i\Omega \succeq 0$, where $\Omega$ is the symplectic form
\bb
\Omega=\bigoplus_{i=1}^m\begin{pmatrix}
    0      & 1 \\
    -1 & 0
\end{pmatrix}\,.
\ee
A Gaussian unitary is defined as a unitary generated by a Hamiltonian that is quadratic in the quadrature operators. Any Gaussian unitary can be decomposed into the composition of a displacement operator and a symplectic Gaussian unitary. Given $\mathbf{r}\in\mathbb{R}^{2m}$, the displacement operator with amplitude $\mathbf{r}\in\mathbb{R}^{2m}$ is defined as $D_{\mathbf{r}} \coloneqq \exp(i \mathbf{r}^\intercal\Omega \mathbf{R})$. Given a symplectic matrix $S\in{\rm Sp}(2m)$, we denote by $U_S$ the associated symplectic Gaussian unitary. Its action on the quadrature operator vector is given by $U_S^\dagger \hat{\mathbf{R}} U_S = S\hat{\mathbf{R}}$.

We are now ready to define Gaussian states.
\begin{Def}[(Bosonic Gaussian states)]
An $m$-mode state $\rho$ is said to be a \emph{Gaussian state} if it can be written in the form
\bb
    \rho = D_{\mathbf{m}}U_S\tau U_S^\dagger D_{\mathbf{m}}^\dagger
\ee
where $\mathbf{m}\in\mathbb{R}^{2m}$, $U_S$ is a symplectic Gaussian unitary associated with $S\in{\rm Sp}(2m)$, and $\tau$ is a \emph{thermal state} of the form
\bb\label{eq_def_thermal}
    \tau = \bigotimes_{j=1}^m\frac{e^{-\xi_ja_j^\dagger a_j}}{\Tr e^{-\xi_ja_j^\dagger a_j}},
\ee
with $\xi_j\in(0,\infty]$.
\end{Def}
A bosonic Gaussian state is uniquely identified by its first moment and covariance matrix.

Equivalently to~\eqref{eq_def_thermal}, a single-mode thermal state $\tau^{(1)}$ with mean photon number $\nu$ can be written in the Fock basis as
\bb
    \tau^{(1)} \coloneqq \frac{1}{\nu+1}
    \sum_{k=0}^\infty
    \left(\frac{\nu}{\nu+1}\right)^k
    \ketbra{k}\,.
\ee
The corresponding two-mode Gaussian state $\ket{\tau}$ provides the standard purification of the thermal state and admits the following representation in the Fock basis:
\bb\label{TMSV}
    \ket{\tau^{(1)}} \coloneqq \frac{1}{\sqrt{\nu+1}}
    \sum_{k=0}^\infty
    \left(\frac{\nu}{\nu+1}\right)^{k/2}
    \ket{k}\otimes\ket{k}\,.
\ee
Note that the mean photon number of $\ket{\tau^{(1)}}$ is exactly twice that of the thermal state $\tau^{(1)}$.

\section{Passive Gaussian states and unitaries}
An important class of Gaussian unitaries are those that leave the mean photon number of states invariant—the so-called \emph{passive} Gaussian unitaries. They are symplectic Gaussian unitaries associated with a symplectic matrix that is also orthogonal. Since the group of $2m\times 2m$orthogonal-symplectic matrices $\mathrm{Sp}(2m)\cap O(2m)$ is isomorphic to the group of $m\times m$ unitary matrices $\mathrm{U}(m)$, any $m$-mode passive Gaussian unitary can be uniquely identified by an $m$-dimensional unitary $u\in \mathrm{U}(m)$ and can thus be denoted simply as $U_u$. It can be shown that their action at the level of annihilation operators is given by
\bb
    U_u^\dagger \mathbf{a} U_u = u\,\mathbf{a}\quad \forall\, u\in \mathrm{U}(m)\,,
\ee
which makes it immediate that the photon number operator is preserved under passive Gaussian unitaries, i.e.~$U_u^\dagger \hat{N} U_u = \hat{N}$.

By definition, a \emph{passive} Gaussian state $\rho$ is a Gaussian state that can be obtained by applying a passive Gaussian unitary to a multimode thermal state. Explicitly,
\bb
    \rho = U_u \,\tau\, U_u^\dagger\,,
\ee
where $U_u$ is a passive Gaussian unitary associated with some $u\in \mathrm{U}(m)$ and $\tau$ denotes a multimode thermal state. Equivalently, a Gaussian state is passive if it has vanishing first moments and a covariance matrix that admits a Williamson decomposition implemented by an orthogonal symplectic matrix.

Let $\ket{\mathbf{i}}$ denote the $m$-mode Fock state associated with $\mathbf{i} \in \mathbb{N}^m$. Defining the un-normalised maximally entangled state over $m \otimes m$ modes as the formal vector 
\bb\label{def_Gamma}
    \ket{\Gamma}_{AB} \coloneqq \sum_{i\in\mathbb{N}^m} \ket{\mathbf{i}}_{A} \otimes \ket{\mathbf{i}}_{B}\,,
\ee
the \emph{standard purification} of an $m$-mode state $\rho$ is given by
\begin{equation}\label{def_can_pur}
    \ket{\psi_\rho}_{AB} \coloneqq \sqrt{\rho_A} \otimes \mathbb{1}_B \, \ket{\Gamma}_{AB}.
\end{equation}
The following lemma establishes that the standard purification of a passive Gaussian state is itself Gaussian and, moreover, has a mean photon number that is twice that of the original state.
\begin{lemma}\label{lemma_std_pur}
The standard purification $\ket{\psi_\rho}_{AB}$ of a passive Gaussian state $\rho$ is Gaussian and, moreover, its mean photon number is twice that of $\rho$.
\end{lemma}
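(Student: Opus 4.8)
The plan is to strip off the passive unitary using the transpose (``ricochet'') identity for the maximally entangled vector, reducing the statement to the thermal case handled by~\eqref{TMSV} and the remark after it. Throughout, the transpose $(\cdot)^\intercal$ and complex conjugation $\overline{(\cdot)}$ are taken in the Fock basis, and I use the identity $(M_A\otimes\id_B)\ket{\Gamma}_{AB}=(\id_A\otimes M^\intercal_B)\ket{\Gamma}_{AB}$, valid for any bounded $M$; although $\ket{\Gamma}$ is only a formal vector, $\ket{\psi_\rho}_{AB}=(\sqrt{\rho}_A\otimes\id_B)\ket{\Gamma}_{AB}$ is a bona fide unit vector of $\HH_m\otimes\HH_m$, since $\|\ket{\psi_\rho}\|^2=\Tr\rho=1$, so these manipulations are legitimate.

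Write $\rho=U_u\,\tau\,U_u^\dagger$ with $\tau=\bigotimes_{j=1}^m\tau^{(1)}_{\nu_j}$ a multimode thermal state, so $\sqrt{\rho}=U_u\sqrt{\tau}\,U_u^\dagger$ and
\[
\ket{\psi_\rho}_{AB}=\bigl(U_{u,A}\,\sqrt{\tau}_A\,U_{u,A}^\dagger\otimes\id_B\bigr)\ket{\Gamma}_{AB}=U_{u,A}\,\sqrt{\tau}_A\,\bigl(U_{u,A}^\dagger\otimes\id_B\bigr)\ket{\Gamma}_{AB}.
\]
The ricochet identity with $M=U_u^\dagger$ turns the leftover $U_{u,A}^\dagger$ into $(U_u^\dagger)^\intercal_B=\overline{U_u}_B$ acting on $B$. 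The one non-routine ingredient here is the fact $\overline{U_u}=U_{\bar u}$: since the creation operators have real matrix elements in the Fock basis, those of $U_u$ are polynomials in the entries of $u$ with real coefficients, so conjugating replaces $u$ by $\bar u$. As $\sqrt{\tau}$ is diagonal in the Fock basis, $\sqrt{\tau}_A$ commutes past $\id_A\otimes\overline{U_u}_B$, and collecting terms gives
\[
\ket{\psi_\rho}_{AB}=\bigl(U_{u,A}\otimes U_{\bar u,B}\bigr)\bigl(\sqrt{\tau}_A\otimes\id_B\bigr)\ket{\Gamma}_{AB}=\bigl(U_{u,A}\otimes U_{\bar u,B}\bigr)\ket{\psi_\tau}_{AB}.
\]

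Since $\tau$ and $\ket{\Gamma}$ factor mode by mode, $\ket{\psi_\tau}_{AB}=\bigotimes_{j=1}^m\ket{\tau^{(1)}_{\nu_j}}_{A_jB_j}$ is a tensor product of two-mode squeezed vacua as in~\eqref{TMSV}, hence a pure Gaussian state with vanishing first moments. The operator $U_{u,A}\otimes U_{\bar u,B}$ is a passive (in particular symplectic) Gaussian unitary on the $2m$-mode system $AB$ --- corresponding to $u\oplus\bar u\in\mathrm{U}(2m)$ --- so it preserves Gaussianity, and therefore $\ket{\psi_\rho}_{AB}$ is a pure Gaussian state. For the photon number: the reduced states of $\ket{\psi_\rho}$ are $\rho$ on $A$ and $\rho^\intercal$ on $B$, and since $\hat N$ is diagonal in the Fock basis one has $\Tr[\rho^\intercal\hat N]=\Tr[\rho\hat N]$; hence the total mean photon number of $\ket{\psi_\rho}_{AB}$ is $2\Tr[\rho\hat N]$. (Equivalently, $U_{u,A}\otimes U_{\bar u,B}$ preserves $\hat N_A+\hat N_B$, so the mean photon number of $\ket{\psi_\rho}$ equals that of $\ket{\psi_\tau}$, namely $\sum_j 2\nu_j=2\Tr[\rho\hat N]$ by the remark after~\eqref{TMSV}.)

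The main obstacle I foresee is making the identity $\overline{U_u}=U_{\bar u}$ --- that is, the interaction of Fock-basis transposition with passive Gaussian unitaries --- fully precise, along with the usual care in handling the formal vector $\ket{\Gamma}$; the rest is bookkeeping about thermal states and photon numbers. One could instead avoid $\ket{\Gamma}$ and argue with covariance matrices, using that a passive state has zero first moments and a Williamson decomposition implemented by an orthogonal symplectic matrix, but the operator-level argument above is shorter and stays closer to the two-mode-squeezed-vacuum picture.
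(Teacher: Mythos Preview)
Your proof is correct and follows essentially the same route as the paper: reduce to the thermal case via the transpose trick on $\ket{\Gamma}$, obtaining $\ket{\psi_\rho}=(U_u\otimes U_u^{\ast})\ket{\psi_\tau}$, and then use that $U_u^{\ast}=U_{\bar u}$ is passive Gaussian and that $\ket{\psi_\tau}$ is a product of two-mode squeezed vacua. Your additional argument for the photon number via the reduced states $\rho$ and $\rho^\intercal$, and your explicit justification of $\overline{U_u}=U_{\bar u}$, are correct refinements that the paper leaves implicit.
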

\begin{proof}
Note that
\bb\label{eq:purification_trick}
    \ket{\psi_\rho}_{AB}=\sqrt{\rho_A}\otimes \mathbb{1}_B\ket{\Gamma} =U_u\sqrt{\tau} U_u^\dagger\otimes \mathbb{1}_B \ket{\Gamma}_{AB} \eqt{(i)} U_u\otimes U_u^\ast  \sqrt{\tau} \ket{\Gamma}_{AB}\eqt{(ii)}U_u\otimes U_u^\ast \ket{\tau}_{AB}\,,
\ee
where in (i) we employed the transpose trick and in (ii) we denoted as $\ket{\tau}$ the standard purification of the multi-mode thermal state $\tau$, which is exactly a tensor product of two-mode squeezed vacuum states (see \eqref{TMSV}). Note that $U_u\otimes U_u^\ast \ket{\tau}$ is a Gaussian state because (a) $U_u^\ast$ is a (passive) Gaussian unitary and (b) $\ket{\tau}$ is a Gaussian state. Moreover, since passive Gaussian unitaries preserve the mean photon number, it follows that the mean photon number of $\ket{\psi_\rho}$ equals the one of $\ket{\tau}$ and, also, that the mean photon number of $\rho$ equals the one of $\tau$. Since the mean photon number of $\ket{\tau}$ is twice the one of $\tau$, this concludes the proof.
\end{proof}
In the following, we consider $n$ copies of an $m$-mode system, and we order the vector of annihilation operators of the resulting $nm$-mode system as 
\bb\label{eq_ordering}
\mathbf{a} &= (\mathbf{a}_1, \ldots, \mathbf{a}_m)\,,
\ee
respectively, where $\mathbf{a}_j$ denote the operator vectors associated with mode $j\in[m]$.
\begin{lemma}\label{passive_gaussian}
    Let $\rho^{\otimes n}$ be the $n$-fold tensor product of a $m$-mode passive Gaussian state $\rho$. Then, for any $u\in \mathrm{U}(m)$, it holds that
    \bb\label{eq:U_o_rho}
    U_{u\otimes\id_m}\rho^{\otimes n}U^\dagger_{u\otimes\id_m}=\rho^{\otimes n}.
\ee
\end{lemma}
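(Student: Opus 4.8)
The plan is to reduce the claim to the case of a multimode thermal state and then invoke the fact that passive Gaussian unitaries leave the photon-number operator fixed. Write $\rho=U_v\,\tau\,U_v^\dagger$ with $v\in\mathrm{U}(m)$ and $\tau=\bigotimes_{j=1}^m\tau_j$ a multimode thermal state, as granted by the definition of a passive Gaussian state. Applying $U_v$ to every one of the $n$ copies is itself a passive Gaussian unitary $U_V$, so that $\rho^{\otimes n}=U_V\,\tau^{\otimes n}\,U_V^\dagger$; in particular $\rho^{\otimes n}$ is again passive Gaussian. Now $U_V$ acts nontrivially only on the internal $m$-mode structure of each copy, whereas $U_{u\otimes\id_m}$ acts nontrivially only on the $n$-copy register; since these registers are complementary, the two unitaries commute (any residual phase ambiguity in the passive representation cancels upon conjugation). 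Hence $U_{u\otimes\id_m}\,\rho^{\otimes n}\,U_{u\otimes\id_m}^\dagger=U_V(U_{u\otimes\id_m}\,\tau^{\otimes n}\,U_{u\otimes\id_m}^\dagger)U_V^\dagger$, and it suffices to prove the claim with $\tau^{\otimes n}$ replacing $\rho^{\otimes n}$.

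For the thermal case, I would regroup the $nm$ modes as in \eqref{eq_ordering}, so that the $j$-th block collects the $n$ copies of mode $j$; then $\tau^{\otimes n}=\bigotimes_{j=1}^m\tau_j^{\otimes n}$, and $U_{u\otimes\id_m}$ acts as one and the same passive Gaussian unitary $U_u$ on the $n$-copy register of each mode $j$. Now $\tau_j^{\otimes n}$ is the $n$-fold tensor power of a single-mode thermal state with a \emph{common} temperature, so by \eqref{eq_def_thermal} it equals $e^{-\xi_j\hat N_j}/\Tr e^{-\xi_j\hat N_j}$, where $\hat N_j$ denotes the photon-number operator of the $j$-th $n$-mode block. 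Since passive Gaussian unitaries satisfy $U_u^\dagger\hat N_j U_u=\hat N_j$, they commute with each $\tau_j^{\otimes n}$; taking the tensor product over $j\in[m]$ then gives $U_{u\otimes\id_m}\,\tau^{\otimes n}\,U_{u\otimes\id_m}^\dagger=\tau^{\otimes n}$, which completes the argument.

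A more computational alternative avoids the commutation bookkeeping. Since $\rho^{\otimes n}$ is passive Gaussian, it is uniquely determined by its first moments and its covariance matrix; the first moments vanish and stay zero under any passive unitary (which acts linearly on $\hat{\mathbf R}$), so one only needs to check that the covariance matrix is fixed. Passing to the complex picture, the relevant data is the Hermitian matrix $N_{jk}=\Tr[\rho^{\otimes n}\,a_j^\dagger a_k]$ --- the anomalous correlations $\Tr[\rho^{\otimes n}\,a_j a_k]$ vanish because $\rho$ is passive --- and, the copies being i.i.d., this matrix takes the form $N(\rho)\otimes\id_n$ in the ordering of \eqref{eq_ordering}. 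A passive unitary with phase-space matrix $w$ sends $N\mapsto\bar w\,N\,w^{\intercal}$ (immediate from $U_w^\dagger a_j U_w=\sum_k w_{jk}a_k$), so the matrix realising $U_{u\otimes\id_m}$, which acts as $u$ on the copy register, turns $N(\rho)\otimes\id_n$ into $N(\rho)\otimes(\bar u\,u^{\intercal})=N(\rho)\otimes\id_n$ by the elementary identity $\bar u\,u^{\intercal}=\id$, and it preserves the vanishing anomalous correlations for the same reason. I expect the substance of the argument to be routine once the conventions are fixed; the only genuinely delicate points are bookkeeping ones --- making sure that $U_{u\otimes\id_m}$ really is the passive unitary that recombines the $n$ copies while leaving the internal $m$-mode structure of each copy untouched, and, in the first route, that this unitary commutes with ``$U_v$ applied to every copy''.
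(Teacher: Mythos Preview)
Your first route is exactly the paper's argument: write $\rho=U_{\bar u}\tau U_{\bar u}^\dagger$, use the commutation $U_{u\otimes\id_m}\,U_{\bar u}^{\otimes n}=U_{\bar u}^{\otimes n}\,U_{u\otimes\id_m}$ (which the paper records as $U_{u\otimes\id_m}U_{\mathbb{1}_n\otimes\bar u}=U_{u\otimes\bar u}=U_{\mathbb{1}_n\otimes\bar u}U_{u\otimes\id_m}$) to reduce to the thermal case, and then observe that $\tau^{\otimes n}$ is invariant --- you phrase this via the block photon numbers $\hat N_j$, the paper via the equivalent statement $\tau^{\otimes n}\propto e^{-\mathbf{a}^\dagger(\mathbb{1}_n\otimes D)\mathbf{a}}$ together with $(u\otimes\id_m)^\dagger(\mathbb{1}_n\otimes D)(u\otimes\id_m)=\mathbb{1}_n\otimes D$. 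Your covariance-matrix alternative is a correct second route that the paper does not take; your closing caveat about bookkeeping is well placed, since the only thing to get right in either approach is the tensor ordering of~\eqref{eq_ordering}.
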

\begin{proof}
Since $\rho$ is a passive Gaussian state, it can be written as $\rho = U_{\bar{u}}\tau U_{\bar{u}}^\dagger$ for a unitary matrix $\bar{u}\in {\rm U}(m)$. Then, it holds that
\bb
    U_{u\otimes\id_m}\rho^{\otimes n}(U_{u\otimes\id_m})^\dagger &=U_{u\otimes\id_m}U_{\bar{u}}^{\otimes n}\tau^{\otimes n}(U_{\bar{u}}^{\otimes n})^\dagger(U_{u\otimes\id_m})^\dagger\\
    &\eqt{(i)}U_{\bar{u}}^{\otimes n}U_{u\otimes\id_m}\tau^{\otimes n}(U_{u\otimes\id_m})^\dagger (U_{\bar{u}}^{\otimes n})^\dagger \\
    &\eqt{(ii)}U_{\bar{u}}^{\otimes n}\tau^{\otimes n}(U_{\bar{u}}^{\otimes n})^\dagger \\
    &=\rho^{\otimes n}.
\ee
Here, in (i), we used that 
\bb
U_{u\otimes\id_m}U_{\bar{u}}^{\otimes n}=U_{u\otimes\id_m}U_{\mathbb{1}_n\otimes\bar{u}}=U_{u\otimes\bar{u}}=U_{\bar{u}}^{\otimes n}U_{u\otimes\id_m}\,.
\ee
In (ii) we used the fact that $U_{u\otimes\id_m}\,\tau^{\otimes n}\, U_{u\otimes\id_m}^\dagger = \tau^{\otimes n}$, which we now prove. Recall that an \(m\)-mode thermal state can be written as $\tau\propto e^{-\mathbf{a}_1^\dagger D \mathbf{a}_1}$ for a suitable \(m\times m\) positive diagonal matrix \(D\), where \(\mathbf{a}_1\) denotes the vector of the \(m\) annihilation operators. It follows that $\tau^{\otimes n}\propto e^{-\mathbf{a}^\dagger (\mathbb{1}_n\otimes D)\mathbf{a}}$, where \(\mathbf{a}\) is the vector of annihilation operators of the joint \(nm\)-mode system, ordered according to \eqref{eq_ordering}. Moreover, since $U_{u\otimes\id_m}^\dagger\, \mathbf{a}\, U_{u\otimes\id_m} = (u\otimes\id_m)\mathbf{a}$, we immediately obtain
\bb
    U_{u\otimes\id_m}\,\tau^{\otimes n}\, U_{u\otimes\id_m}^\dagger
    \propto e^{-\mathbf{a}^\dagger (u^\dagger\otimes\id_m)(\mathbb{1}_n\otimes D)(u\otimes\id_m)\mathbf{a}}
    = e^{-\mathbf{a}^\dagger (\mathbb{1}_n\otimes D)\mathbf{a}}
    \propto \tau^{\otimes n}.
\ee
This concludes the proof.
\end{proof}

\section{Passive Gaussian commutant} 
The group $\mathrm{U}(nm)$ is represented as the subgroup of Gaussian unitaries that preserves the mean photon number, which physically correspond to passive interferometers composed of phase shifters and beam splitters. The Hilbert space $\mathcal{H}_{m,n}={(\mathcal{H}^{\otimes m})}^{\otimes n}$ of $n$ copies of $m$ bosonic modes is thus a module for a representation of $\mathrm{U}(nm)$. In particular, $\mathrm{U}(nm)$ contains the subgroup $\mathrm{U}(n)\times \mathrm{U}(m)\subseteq \mathrm{U}(nm)$, and the restriction of the representation of $U(nm)$ to $\mathrm{U}(n)\times \mathrm{U}(m)$ can be decomposed into irreducible representations of $\mathrm{U}(n)\times \mathrm{U}(m)$. 
In what follows, we denote the corresponding two commuting subgroups as
\bb
G_{1}=\{\id_{n}\otimes \mathrm{U}(m)\}\cong \mathrm{U}(m),\qquad G_{2}=\{\mathrm{U}(n)\otimes\id_m \}\cong \mathrm{U}(n). 
\label{G_1_and_G2}
\ee
Being composed of passive unitaries, the representation of $\mathrm{U}(n)\times \mathrm{U}(m)$ does not change the total photon number. Therefore, we can immediately decompose the overall space into subspaces indexed by the total photon number $N\in \N$. Howe duality~\cite{howe1989remarks, RevModPhys.84.711, rowe_simple_2011} gives a more precise statement on the reduction into tensor products of irreducible representations of $\mathrm{U}(n)$ and $\mathrm{U}(m)$ within each single block:


\begin{thm}[{(Bosonic Howe duality~\cite[Theorem~1]{rowe_simple_2011})}]
The decomposition into irreducible representations of $\mathrm{U}(n)\times \mathrm{U}(m)$ of $\mathcal{H}_{m,n}$ is
\bb
\mathcal{H}_{m,n}=\bigoplus_{N=0}^{\infty} \bigoplus_{\lambda \vdash N} \mathbb{H}^{(\lambda)}_{n}\otimes \mathbb{H}_{m}^{(\lambda)}\,,
\ee
where $N$ represents the total photon number, and the sum over $\lambda$ runs over all ordered partition of $N$, i.e.\ over all tuples $\lambda=(\lambda_{1},\ldots,\lambda_{l(\lambda)})$ with $\lambda_1\geq \ldots \geq \lambda_{l(\lambda)} > 0$, $\sum_{i=1}^{l(\lambda)} \lambda_i = N$, and $l(\lambda)\leq \min(m,n)$.
\end{thm}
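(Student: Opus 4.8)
The statement is the classical $(\mathrm{GL}_n,\mathrm{GL}_m)$ Howe duality realised on a bosonic Fock space, so the plan is to translate the physical picture into the language of symmetric algebras and then invoke Schur--Weyl duality twice. First I would reduce to a fixed photon sector: every element of $G_1\times G_2\cong\mathrm{U}(m)\times\mathrm{U}(n)$ is passive and hence commutes with $\hat N$, so $\mathcal H_{m,n}=\bigoplus_{N\ge 0}\mathcal H_{m,n}^{(N)}$ is an orthogonal decomposition into finite-dimensional $G_1\times G_2$-submodules and it suffices to decompose each $N$-photon sector. Labelling the $nm$ modes by a copy index $c\in[n]$ and a mode index $j\in[m]$, the creation operators $a_{c,j}^\dagger$ span a space $W\cong\mathbb{C}^n\otimes\mathbb{C}^m$, and because they commute, $\mathcal H_{m,n}^{(N)}=\mathrm{span}\{a^\dagger_{c_1,j_1}\cdots a^\dagger_{c_N,j_N}\ket{0}\}$ is canonically $S^N(W)$. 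From $U_u^\dagger\mathbf a\,U_u=u\mathbf a$ and $U_u\ket{0}=\ket{0}$ one computes $U_u\,a_j^\dagger\ket{0}=\sum_k u_{kj}\,a_k^\dagger\ket{0}$, i.e.\ a single photon carries the \emph{standard} representation; identifying the copy space with $\mathbb{C}^n$ (acted on by $G_2$) and the mode space with $\mathbb{C}^m$ (acted on by $G_1$), this gives $\mathcal H_{m,n}^{(N)}\cong S^N(\mathbb{C}^n\otimes\mathbb{C}^m)$ as a $\mathrm{U}(n)\times\mathrm{U}(m)$-module.

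Next I would apply Schur--Weyl duality twice. Write $S^N(\mathbb{C}^n\otimes\mathbb{C}^m)=\big[(\mathbb{C}^n)^{\otimes N}\otimes(\mathbb{C}^m)^{\otimes N}\big]^{S_N}$, the invariants under the diagonal $S_N$-action. Schur--Weyl duality gives $(\mathbb{C}^n)^{\otimes N}\cong\bigoplus_{\lambda\vdash N}\mathbb H_n^{(\lambda)}\otimes\mathrm{Sp}^\lambda$ and $(\mathbb{C}^m)^{\otimes N}\cong\bigoplus_{\mu\vdash N}\mathbb H_m^{(\mu)}\otimes\mathrm{Sp}^\mu$ as $\mathrm{U}(n)\times S_N$- and $\mathrm{U}(m)\times S_N$-modules respectively, where $\mathrm{Sp}^\lambda$ denotes the Specht module and $\mathbb H_n^{(\lambda)}\ne 0$ exactly when $\ell(\lambda)\le n$. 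Tensoring and taking diagonal $S_N$-invariants,
\[
S^N(\mathbb{C}^n\otimes\mathbb{C}^m)\;\cong\;\bigoplus_{\lambda,\mu\vdash N}\mathbb H_n^{(\lambda)}\otimes\mathbb H_m^{(\mu)}\otimes\big(\mathrm{Sp}^\lambda\otimes\mathrm{Sp}^\mu\big)^{S_N}.
\]
Since Specht modules are self-dual over $\mathbb{C}$, one has $\big(\mathrm{Sp}^\lambda\otimes\mathrm{Sp}^\mu\big)^{S_N}\cong\operatorname{Hom}_{S_N}(\mathrm{Sp}^\lambda,\mathrm{Sp}^\mu)$, which by Schur's lemma is $\mathbb{C}$ if $\lambda=\mu$ and $0$ otherwise. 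Hence the $N$-photon sector equals $\bigoplus_{\lambda\vdash N}\mathbb H_n^{(\lambda)}\otimes\mathbb H_m^{(\lambda)}$, a summand surviving precisely when $\ell(\lambda)\le n$ and $\ell(\lambda)\le m$, i.e.\ $\ell(\lambda)\le\min(m,n)$. Summing over $N$ yields the claimed decomposition of $\mathcal H_{m,n}$.

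The representation-theoretic input is entirely standard, so the only point demanding genuine care is the first step: checking that the physical action of $G_1$ and $G_2$ on the single-photon space is the \emph{un-conjugated} standard representation on the mode, respectively copy, tensor factor, and that the symmetric-algebra identification $\mathcal H_{m,n}^{(N)}\cong S^N(\mathbb{C}^n\otimes\mathbb{C}^m)$ is $G_1\times G_2$-equivariant with the ordering convention of \eqref{eq_ordering}; once this is pinned down, the rest is bookkeeping. As an alternative to Schur--Weyl one could argue at the level of characters: the $\mathrm{U}(n)\times\mathrm{U}(m)$-character of $S^N(\mathbb{C}^n\otimes\mathbb{C}^m)$ is $h_N(x_iy_j)=\sum_{\lambda\vdash N}s_\lambda(x)s_\lambda(y)$ by the Cauchy identity, and since finite-dimensional characters of compact groups determine the module, the decomposition follows, with the length restriction arising from the vanishing of $s_\lambda$ in fewer than $\ell(\lambda)$ variables.
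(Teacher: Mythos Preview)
The paper does not give its own proof of this theorem: it is stated as a citation of~\cite[Theorem~1]{rowe_simple_2011} and used as a black box, so there is no in-paper argument to compare against. Your proposal is a correct and standard derivation of the $(\mathrm{U}(n),\mathrm{U}(m))$ Howe duality on the bosonic Fock space. The identification $\mathcal H_{m,n}^{(N)}\cong S^N(\mathbb C^n\otimes\mathbb C^m)$ is exactly right (and your verification that the single-photon space carries the un-conjugated standard representation is the correct sanity check given the paper's convention $U_u^\dagger\mathbf a\,U_u=u\,\mathbf a$); the double Schur--Weyl argument, together with self-duality of the Specht modules over $\mathbb C$, is a clean way to extract the multiplicity-free decomposition and the length constraint $\ell(\lambda)\le\min(m,n)$. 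The Cauchy-identity alternative you sketch is equally valid and arguably shorter. In short: your argument is sound and supplies a proof where the paper simply quotes the result.
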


A consequence of this reduction is a characterisation of the commutant of the restriction representations of the subgroups $G_{1}$ and $G_{2}$. We give a standard argument for a weak version of this characterisation that is enough for our purpose. To state it, we need a brief preliminary on the notion of \emph{strong operator topology}. A sequence $(A_K)_K$ of bounded operators on the same Hilbert space $\mathcal{H}$ is said to converge to another bounded operator $A$ on the same space if, for all vectors $\ket{\psi}$, it holds that $\lim_{K\to\infty} \left\|A_K\ket{\psi} - A\ket{\psi}\right\| = 0$.

\begin{thm}
Let $G_1,G_2$ be as in~\eqref{G_1_and_G2}. Any bounded operator $A$ such that $[A,U_{u}]=0$ for all $u\in G_{1}$ is the strong operator topology limit, as $K\to\infty$, of $A_K=P_KAP_K$, where $P_K$ is the projector on the space of total photon number up to $K$, and $A_K$ lies in $\Span\{
P_KU_{u_{2}}\!: u_{2}\in G_{2}\}$. If, in particular, $A=\Pi_kA\Pi_k$, where $\Pi_k$ is the projector at fixed total photon number $k$, A lies in $\Span\{
\Pi_kU_{u_{2}}\!: u_{2}\in G_{2}\}$. The same is true with $G_1$ and $G_2$ exchanged. 
\end{thm}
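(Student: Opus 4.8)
The plan is to reduce this to a purely finite-dimensional fact about representations of compact groups, applied one total-photon-number sector at a time. First I would record two elementary observations used throughout. (i) Every $U_u$ with $u\in G_1$ and every $U_v$ with $v\in G_2$ is passive, hence commutes with $\hat{N}$ and therefore with each $\Pi_k$ and each $P_K$; moreover $U_u$ and $U_v$ commute with one another, since they implement the commuting unitaries $\id_n\otimes u$ and $v\otimes\id_m$. (ii) The projectors $P_K$ increase strongly to $\id$ — their ranges are spanned by Fock states and exhaust $\mathcal{H}_{m,n}$ — so $P_KAP_K\to A$ in the strong operator topology for every bounded $A$, by the estimate $\|P_KAP_K\psi-A\psi\|\le\|A\|\,\|P_K\psi-\psi\|+\|P_KA\psi-A\psi\|$.

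Next I would isolate the finite-dimensional input. Fix $K$ and set $\mathcal{H}_{\le K}:=P_K\mathcal{H}_{m,n}=\bigoplus_{k\le K}\bigoplus_{\lambda\vdash k}\mathbb{H}_n^{(\lambda)}\otimes\mathbb{H}_m^{(\lambda)}$, a finite-dimensional space on which, by bosonic Howe duality, $G_1\cong\mathrm{U}(m)$ acts as $u\mapsto\bigoplus_\lambda\id\otimes\pi^m_\lambda(u)$ and $G_2\cong\mathrm{U}(n)$ acts as $v\mapsto\bigoplus_\lambda\pi^n_\lambda(v)\otimes\id$, where $\pi^m_\lambda$ (resp.\ $\pi^n_\lambda$) denotes the irreducible action on $\mathbb{H}_m^{(\lambda)}$ (resp.\ $\mathbb{H}_n^{(\lambda)}$). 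Because $l(\lambda)\le\min(m,n)$, the partitions occurring here label pairwise inequivalent polynomial irreps of $\mathrm{U}(m)$, and likewise of $\mathrm{U}(n)$ — this holds even across different photon-number sectors, since $\lambda$ already fixes $|\lambda|$. By Schur's lemma together with Burnside's theorem (the operators of a finite-dimensional irreducible representation linearly span its full endomorphism algebra; equivalently, the finite-dimensional bicommutant theorem), it then follows that the commutant of $G_1|_{\mathcal{H}_{\le K}}$ inside $\mathrm{End}(\mathcal{H}_{\le K})$ equals $\bigoplus_\lambda\mathrm{End}\big(\mathbb{H}_n^{(\lambda)}\big)\otimes\id$, and that this coincides with the linear span of $\{\,U_v|_{\mathcal{H}_{\le K}}:v\in G_2\,\}$; the analogous identity holds with a single sector $\mathcal{H}^{(k)}:=\Pi_k\mathcal{H}_{m,n}$ in place of $\mathcal{H}_{\le K}$.

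To close, let $A$ be bounded with $[A,U_u]=0$ for all $u\in G_1$. Since $P_K$ commutes with each $U_u$, one has $[P_KAP_K,U_u]=P_K[A,U_u]P_K=0$, so the restriction of $P_KAP_K$ to $\mathcal{H}_{\le K}$ lies in the commutant of $G_1|_{\mathcal{H}_{\le K}}$ and hence equals $\sum_j c_j\,U_{v_j}|_{\mathcal{H}_{\le K}}$ for finitely many $c_j\in\mathbb{C}$, $v_j\in G_2$; because $P_K$ commutes with each $U_{v_j}$ and annihilates $\mathcal{H}_{\le K}^{\perp}$, this lifts to $A_K=P_KAP_K=\sum_j c_j\,P_KU_{v_j}$ on all of $\mathcal{H}_{m,n}$, so $A_K\in\Span\{P_KU_v:v\in G_2\}$, and $A_K\to A$ strongly — giving the first assertion. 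If moreover $A=\Pi_kA\Pi_k$, running the same computation on $\mathcal{H}^{(k)}$ with $\Pi_k$ in place of $P_K$ shows directly, with no limit required, that $A\in\Span\{\Pi_kU_v:v\in G_2\}$. The statement with $G_1$ and $G_2$ interchanged follows verbatim after swapping the roles of $m$ and $n$, i.e.\ of the two tensor factors $\mathbb{H}_n^{(\lambda)}$ and $\mathbb{H}_m^{(\lambda)}$ in each Howe block.

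I expect no deep obstacle — this is the standard ``weak commutant'' argument — but two points need care. The first is the bookkeeping around the truncations: one must not try to show that $A$ itself lies in the closed span of $\{U_v:v\in G_2\}$, which would be a genuine von Neumann bicommutant statement in infinite dimensions; only the finite compressions $A_K=P_KAP_K$ are shown to lie in the algebraic span of the $P_KU_v$, with $A$ recovered by strong-operator convergence. The second is checking that the partitions indexing bosonic Howe duality really do yield pairwise inequivalent irreps of \emph{both} $\mathrm{U}(m)$ and $\mathrm{U}(n)$, including across distinct photon-number sectors — which is exactly where $l(\lambda)\le\min(m,n)$ is used.
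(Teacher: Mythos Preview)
Your proof is correct and follows essentially the same route as the paper's: truncate to finite photon number, use Howe duality together with Schur's lemma and the finite-dimensional double commutant (Burnside) to identify the $G_1$-commutant with the span of $G_2$, then recover $A$ by strong-operator convergence of $P_KAP_K$. The only difference is organisational --- you work globally on $\mathcal{H}_{\le K}$ using pairwise inequivalence of the polynomial irreps across all $\lambda$, whereas the paper first shows $A$ is block-diagonal in $\hat N$, treats each $\lambda$-block separately, and invokes the character-integral formula $\Pi_\lambda = d_\lambda\int\! \dd u\,\overline{\chi_\lambda(u)}\,U_u$ for the isotypical projectors --- but the mathematical content is the same.
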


\begin{proof}
Let $\Pi_{\lambda}$ be the isotypical projector on the irrep space $\lambda$ of $G_{1}$ in $\mathcal{H}_{m,n}$, which can be written as $\Pi_{\lambda}= d_\lambda \int \dd u\, \overline{\chi_{\lambda}(u)}\,U_{u}$, where $\dd u$ is the invariant measure of $\mathrm{U}(n)$, $\chi_{\lambda}(u)$ is the character of the irrep $\lambda$, and $d_\lambda \coloneqq \dim \mathbb{H}^{(\lambda)}_{n}$. Due to character theory, this is the projector onto the sum of all the subspaces associated with copies of the irrep $\lambda$, which, by Howe duality, coincides with the finite-dimensional space $\mathbb{H}^{(\lambda)}_{n}\otimes \mathbb{H}_{m}^{(\lambda)}$; this is, in turn, contained into the space with total photon number equal to $\sum_{i=1}^{l(\lambda)}\lambda_i$.  

We now turn our attention to the operator $A$. This can be seen to commute with the total number $\hat{N}$, since 
\bb
u_{\phi}=e^{i \phi}\mathbb{1}_{mn}\in G_1 ,
\ee
and $U_{u_{\phi}}=e^{i\phi \hat{N}}$, with $\hat{N}$ being the total photon number operator. Therefore, $A$ has to be block-diagonal with respect to the decomposition of the space into subspaces with definite total photon number: $A=\bigoplus_{N=0}^{\infty} A_N$, and $[A_N,U_{u}]=0$ for all $u\in G_{1}$. Via Schur's lemma, $A_N$ can in turn be decomposed as $A_N=\bigoplus_{\lambda\vdash N} I_{\lambda}\otimes A_{\lambda}$. By irreducibility of $\mathbb{H}_{n}^{(\lambda)}$, and since $\mathbb{H}_{n}^{(\lambda)}$ is finite dimensional, $A_\lambda \in {\Span\{D_{\lambda}(u)\!:\, u\in \mathrm{U}(n)\}}$, where $D_{\lambda}(u)$ is the representation matrix of $u$ of the irrep $\lambda$, because of the finite-dimensional double commutant theorem. Therefore, each operator $I_{\lambda}\otimes A_{\lambda}$ can be written as a linear combination $\sum_{i} c_{\lambda,i}\, \Pi_{\lambda} U_{u_i} \Pi_{\lambda}$, where $u_i\in G_1$ for all $i$, and the range of $i$ is finite. 
This means that we can define the following truncated approximations of $A$:
\bb
A_{K} \coloneqq&\ P_KAP_K \\
=&\ \bigoplus_{N=0}^{K}\bigoplus_{\lambda\vdash N} I_{\lambda}\otimes A_{\lambda} \\
=&\ \sum_{\lambda\in \Lambda_{K}} \Pi_{\lambda}(I\otimes A_{\lambda})\Pi_{\lambda}\\
=&\ \sum_{\lambda\in \Lambda_{K}} d^2_{\lambda} \sum_{i} c_{\lambda,i} \int \!\!\dd u\! \int \!\!\dd u'\ \overline{\chi_{\lambda}(u)}\overline{\chi_{\lambda}(u')}\ U_{u}  U_{u_i} U_{u'}\,,
\ee
where 
\bb
\Lambda_K \coloneqq \left\{(\lambda_{1},\ldots,\lambda_{l(\lambda)}):\ \lambda_1\geq \ldots \geq \lambda_{l(\lambda)} > 0,\ \sum_{i=1}^{l(\lambda)} \lambda_i \leq K,\ \, l(\lambda)\leq \min(m,n)\right\}.
\ee
Taking into account the left and right projectors $P_K$, we thus see that the above operator lies in the span of $\{P_K U_u P_K\!:\, u\in \mathrm{U}(n)\}$. The statement of the thesis in the case $A=\Pi_kA\Pi_k$ clearly follows, because we only have $N=k$ in the direct sum above, and $A$ is supported on a finite-dimensional space. Since $P_K A P_{K}$ converges to $A$ in the strong operator topology, we also have the rest of the statement.
\end{proof}

Essentially the same proof can also be used for the following, using the direct group $G_1\times G_1$ instead of $G_1$:


\begin{lemma}\label{lem:bounded}
Let $G_1,G_2$ be as in~\eqref{G_1_and_G2}. Any operator $A$ on $(\mathcal{H}_{m,n})^{\otimes 2}$ such that $A=\Pi_kA\Pi_k$, where $\Pi_k$ is the projector at fixed total photon number $k$, and $[A,U_{u}]=0$ for all $u\in G_{1}$ lies in $\Span\{
\Pi_kU_{u_{2}}\!: u_{2}\in G_{2}\}$. Then, any bounded operator $A$ on $(\mathcal{H}_{m,n})^{\otimes 2}$ satisfying 
\bb
[A,U_{u_{A}}\otimes U_{u_{A'}}]=0 \qquad \forall\, u_{A},u_{A'}\in G_{1}
\ee 
is the limit --- in the strong operator topology, for $K\to\infty$ --- of operators
\bb
    A_K\in \Span\big\{P_K \big(U_{u_{B}}\otimes U_{u_{B'}}\big) P_K: u_{B}, u_{B'}\in G_{2}\big\},
\ee  
where $P_K$ is the projector on the space of total photon number at most $K$. 

In particular, if \mbox{$A=\Pi_kA\Pi_k$}, where $\Pi_k$ is the projector at fixed total photon number $k$, then A lies in $\Span\big\{\Pi_k \big(U_{u_{B}}\otimes U_{u_{B'}}\big) \Pi_k: u_{B}, u_{B'}\in G_{2}\big\}$. 
\end{lemma}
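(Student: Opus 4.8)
The plan is to follow the hint literally: rerun the proof of the preceding theorem with $G_1$ replaced by the direct product $G_1\times G_1$, acting on $(\mathcal{H}_{m,n})^{\otimes 2}$ via $(u_A,u_{A'})\mapsto U_{u_A}\otimes U_{u_{A'}}$, and, dually, $G_2$ replaced by $G_2\times G_2$. As in that proof, the starting observation is that $e^{i\phi}\id_m\in G_1$ implements the phase $e^{i\phi\hat{N}}$, so the hypothesis $[A,U_{u_A}\otimes U_{u_{A'}}]=0$ forces $A$ to commute with $e^{i\phi\hat{N}_A}$ and with $e^{i\phi\hat{N}_{A'}}$, where $\hat{N}_A,\hat{N}_{A'}$ are the photon-number operators of the two copies. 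Hence $A$ is block-diagonal with respect to the joint spectrum of $(\hat{N}_A,\hat{N}_{A'})$; in particular it commutes with $\Pi_k$ and with $P_K$, and it suffices to treat the fixed-total-photon-number case first and then bootstrap to the bounded case.

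For fixed total photon number I would invoke bosonic Howe duality on each copy and tensor the two decompositions, obtaining $(\mathcal{H}_{m,n})^{\otimes 2}=\bigoplus_{\lambda,\mu}\big(\mathbb{H}_n^{(\lambda)}\otimes\mathbb{H}_n^{(\mu)}\big)\otimes\big(\mathbb{H}_m^{(\lambda)}\otimes\mathbb{H}_m^{(\mu)}\big)$, a multiplicity-free sum over pairs of partitions in which $G_1\times G_1\cong\mathrm{U}(m)\times\mathrm{U}(m)$ acts on $\mathbb{H}_m^{(\lambda)}\otimes\mathbb{H}_m^{(\mu)}$ as the (irreducible) external tensor product of the two irreps, while $G_2\times G_2\cong\mathrm{U}(n)\times\mathrm{U}(n)$ acts analogously on $\mathbb{H}_n^{(\lambda)}\otimes\mathbb{H}_n^{(\mu)}$. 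Combining Schur's lemma with the block-diagonality above, any bounded $A$ commuting with $G_1\times G_1$ has the form $A=\bigoplus_{\lambda,\mu}B_{\lambda,\mu}\otimes\id$ with $B_{\lambda,\mu}\in\mathrm{End}\big(\mathbb{H}_n^{(\lambda)}\otimes\mathbb{H}_n^{(\mu)}\big)$.

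The remaining algebraic input is the finite-dimensional double commutant theorem: since $\mathbb{H}_n^{(\lambda)}$ is an irrep of $\mathrm{U}(n)$ we have $\Span\{D_\lambda(u):u\in\mathrm{U}(n)\}=\mathrm{End}\big(\mathbb{H}_n^{(\lambda)}\big)$, and taking tensor products this upgrades to $\Span\{D_\lambda(u)\otimes D_\mu(u'):u,u'\in\mathrm{U}(n)\}=\mathrm{End}\big(\mathbb{H}_n^{(\lambda)}\otimes\mathbb{H}_n^{(\mu)}\big)$, so that the algebra generated by $G_2\times G_2$ exhausts the commutant of $G_1\times G_1$ block by block. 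Thus each $B_{\lambda,\mu}$ is a finite linear combination of operators $D_\lambda(u)\otimes D_\mu(u')$; writing the block projector as the joint isotypic projector $\Pi_{\lambda,\mu}=d_\lambda d_\mu\int\!\int \overline{\chi_\lambda(u)}\,\overline{\chi_\mu(u')}\,(U_u\otimes U_{u'})\,\dd u\,\dd u'$ (with $\chi_\lambda$ the $\mathrm{U}(n)$-character and $d_\lambda=\dim\mathbb{H}_n^{(\lambda)}$), exactly as in the previous proof, each summand $B_{\lambda,\mu}\otimes\id$ lands in $\Span\{\Pi_k(U_{u_B}\otimes U_{u_{B'}})\Pi_k:u_B,u_{B'}\in G_2\}$, and summing over the finitely many $(\lambda,\mu)$ with $|\lambda|+|\mu|\le k$ gives the fixed-photon-number assertion. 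For a general bounded $A$ I would set $A_K=P_KAP_K$, which by block-diagonality equals $\bigoplus_{k\le K}\Pi_kA\Pi_k$ and therefore, by the case just treated, lies in $\Span\{P_K(U_{u_B}\otimes U_{u_{B'}})P_K:u_B,u_{B'}\in G_2\}$; since $A_K\to A$ in the strong operator topology for bounded $A$, this finishes the proof.

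I expect the only slightly delicate point to be the bookkeeping around the photon-number truncation: one has to check that $P_K$ (equivalently, each $\Pi_k$ with $k\le K$) itself belongs to $\Span\{P_K(U_{u_B}\otimes U_{u_{B'}})P_K\}$, which holds by Fourier-analysing $e^{i\phi\hat{N}_A}=U_{e^{i\phi}\id_n}\otimes\id$ in $\phi$ — a genuine finite linear combination once $P_K$ is applied — and that $P_KAP_K\to A$ converges strongly, which is standard. Apart from this, the argument is essentially word-for-word the compactness-and-representation-theory proof of the single-group case, with external tensor products of $\mathrm{U}(m)$- and $\mathrm{U}(n)$-irreducibles in place of single irreducibles throughout.
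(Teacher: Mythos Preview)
Your proposal is correct and follows exactly the route the paper intends: replace $G_1$ by $G_1\times G_1$ (and dually $G_2$ by $G_2\times G_2$), tensor two copies of the Howe decomposition to get a multiplicity-free $\mathrm{U}(m)\times\mathrm{U}(m)$ vs.\ $\mathrm{U}(n)\times\mathrm{U}(n)$ decomposition indexed by pairs $(\lambda,\mu)$, and then rerun the Schur's-lemma / finite-dimensional double-commutant / isotypic-projector argument verbatim. The paper does not write this out (it only says ``essentially the same proof''), and your expansion of the details --- in particular the block-diagonality in $(\hat N_A,\hat N_{A'})$, the characterisation of each $B_{\lambda,\mu}$, and the $P_KAP_K\to A$ bootstrap --- is faithful to the single-group proof.
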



\section{Construction of the random passive Gaussian purification channel}

Let us consider $n$ copies of two $m$-mode systems $A$ and $B$, collectively denoted by $A^n B^n$. We introduce the following positive semidefinite operators acting on $A^n B^n$:
\begin{equation}\label{eq:R_n}
    R_{n,k} \coloneqq \EE{u}\!\left[
        \id_{A^n} \otimes U_u^{\otimes n}\;
        \Gamma_{AB,k}^{\otimes n}\;
        \id_{A^n} \otimes (U_u^\dagger)^{\otimes n}
    \right],
\end{equation}
where 
\bb
\Gamma_{AB,k}^{\otimes n} \coloneqq (\Pi_k\otimes \mathbb{1})\Gamma_{AB}^{\otimes n}(\Pi_k\otimes \mathbb{1}),
\ee 
$\Pi_k$ is the projector on the space of total photon number \emph{exactly} $k$\footnote{Not to be confused with $P_k$, the projector on the space of total photon number \emph{at most} $k$.} of systems $A^n$, the expectation value is taken over Haar-random unitaries $u\in\mathrm{U}(m)$, $U_u$ is the passive Gaussian unitary associated with $u$, and $\Gamma_{AB}$ is the un-normalised maximally entangled state defined in \eqref{def_Gamma}. Note that $\Gamma_{AB,k}^{\otimes n}$ is also in general supported on a space of joint total photon number exactly $2k$, due to the form of $\Gamma_{AB}^{\otimes n}$. In particular, it is well defined, in spite of the fact that $\Gamma_{AB}$ is not a bounded operator.

The aim of this section is to define a random purification channel in the bosonic passive Gaussian case using the operators in~\eqref{eq:R_n}, by mimicking the construction in~\cite{random_pur_simple}. To this end, and to account for some subtleties arising from the fact that the underlying space is infinite dimensional, we need to briefly discuss some simple topological notions. In total analogy with the strong operator topology defined on sequences of bounded operators, a sequence $(\Theta_k)_{k\in \N}$ of quantum channels $\Theta_k = (\Theta_k)_{A\to B}$ is said to converge to a map $\Theta = \Theta_{A\to B}$ with respect to the \emph{topology of strong convergence}~\cite{Shirokov2008} if it holds that $\lim_{k\to\infty} \left\| \Theta_k(\rho) - \Theta(\rho) \right\|_1 = 0$ for all input states $\rho$ on $A$. When this is the case, $\Theta$ itself can be shown to be a quantum channel.

\begin{boxedthm}{}
\begin{thm}[(Random passive Gaussian purification channel)]\label{thm:random_gaussian}
Let $\Lambda^{(n)} \colon A^n \to A^n B^n$ be the quantum channel defined by
\begin{equation} \label{eq:random_gaussian}
    \Lambda^{(n)}(\,\cdot\,)
    \coloneqq
    \sum_{k=0}^\infty\sqrt{R_{n,k}}\,
    \big(\,\cdot\, \otimes \id_{B^n}\big)\,
    \sqrt{R_{n,k}},
\end{equation}
where $R_{n,k}$ is the operator on $A^n B^n$ introduced in~\eqref{eq:R_n}, and the series converges with respect to the topology of strong convergence. Then $\Lambda^{(n)}$ converts $n$ copies of an arbitrary passive Gaussian state $\rho_A$ into $n$ copies of the same random Gaussian purification of $\rho_A$, each having a mean photon number equal to twice that of $\rho_A$. More precisely,
\begin{equation}\label{eq:lambda_iid2}
    \Lambda^{(n)}\!\big(\rho_A^{\otimes n}\big)
        =
        \EE{u}\!\left[
            \left(
                (\id_A \otimes U_u)\,
                (\psi_\rho)_{AB}\,
                (\id_A \otimes U_u^\dagger)
            \right)^{\otimes n}
        \right],
\end{equation}
where the expectation value is taken over Haar random unitaries $u\in \mathrm{U}(m)$, and $(\psi_\rho)_{AB}$ denotes the standard purification of $\rho$ defined in~\eqref{def_can_pur}.
\end{thm}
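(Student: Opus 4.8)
The plan is to reduce everything to the operator identity~\eqref{eq:lambda_iid2} and then to prove it by showing that, for each $k$, the operator $R_{n,k}$ commutes with $\rho_A^{\otimes n}\otimes\id_{B^n}$; this is the step where Howe duality (through Lemma~\ref{lem:bounded} and Lemma~\ref{passive_gaussian}) is the real engine. Everything else in the statement is then immediate: each state $(\id_A\otimes U_u)(\psi_\rho)_{AB}(\id_A\otimes U_u^\dagger)$ appearing on the right-hand side of~\eqref{eq:lambda_iid2} is a purification of $\rho$ because $U_u$ is unitary on $B$; it is Gaussian because $(\psi_\rho)_{AB}$ is Gaussian by Lemma~\ref{lemma_std_pur} and $U_u$ is a passive, hence Gaussian, unitary; and its mean photon number equals that of $(\psi_\rho)_{AB}$ since $U_u$ preserves photon number, and by Lemma~\ref{lemma_std_pur} this is twice that of $\rho$. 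In what follows I write $\ket{\psi_{\rho_A^{\otimes n}}}$ for the standard purification~\eqref{def_can_pur} of $\rho_A^{\otimes n}$, which upon reorganising the $n$ copies equals $(\psi_\rho)_{AB}^{\otimes n}$, while $\id_{A^n}\otimes U_u^{\otimes n}$ equals $(\id_A\otimes U_u)^{\otimes n}$.

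For the crux I would first record that $R_{n,k}$, being by construction a Haar average over $\mathrm U(m)$ of $(\id_{A^n}\otimes U_u^{\otimes n})$-conjugates of $\Gamma_{AB,k}^{\otimes n}$, commutes with $\id_{A^n}\otimes U_{u'}^{\otimes n}$ for all $u'\in\mathrm U(m)$. Using the transpose trick $(\id_A\otimes M)\ket\Gamma=(M^{\intercal}\otimes\id_B)\ket\Gamma$ — where $M^{\intercal}$ is the transpose in the Fock basis, which for $M=U_u^{\otimes n}$ is again a passive Gaussian unitary $U_v^{\otimes n}$ with $u\mapsto v$ Haar-preserving — together with $[U_u^{\otimes n},\Pi_k]=0$, one rewrites $R_{n,k}$ as a Haar average of $(U_v^{\otimes n}\otimes\id_{B^n})$-conjugates of $\Gamma_{AB,k}^{\otimes n}$; hence $R_{n,k}$ also commutes with $U_v^{\otimes n}\otimes\id_{B^n}$, and therefore with $U_{u'}^{\otimes n}\otimes U_{v'}^{\otimes n}$ for all $u',v'\in G_1\cong\mathrm U(m)$. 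Since $\Gamma_{AB,k}^{\otimes n}$, hence $R_{n,k}$, is a bounded operator supported on the finite-dimensional subspace of joint photon number $2k$ (where both $A^n$ and $B^n$ carry photon number $k$), Lemma~\ref{lem:bounded} applied at total photon number $2k$ gives that $R_{n,k}$ is a finite linear combination of operators of the form $P\,(U_w\otimes U_{w'})\,P$, with $P$ the joint-photon-number-$2k$ projector and $U_w,U_{w'}$ in the $G_2$-representation on $A^n,B^n$. Now Lemma~\ref{passive_gaussian} states precisely that $\rho_A^{\otimes n}$ commutes with every element of the $G_2$-representation on $A^n$, while $\rho_A^{\otimes n}\otimes\id_{B^n}$ trivially commutes with $\id_{A^n}\otimes U_{w'}$ and with $P$ (as $\rho_A^{\otimes n}$ commutes with the photon-number operator). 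Hence $[R_{n,k},\,\rho_A^{\otimes n}\otimes\id_{B^n}]=0$; equivalently, by the $(\mathrm U(n),\mathrm U(m))$ dual pair, $R_{n,k}$ lies in the commutant of the $G_1$-action on $A^n$ — which is the algebra generated by the $G_2$-action — and $\rho_A^{\otimes n}$ is central for that algebra.

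With this commutation the rest is bookkeeping. It yields $[\sqrt{\rho_A^{\otimes n}}\otimes\id_{B^n},\sqrt{R_{n,k}}]=0$, so $\sqrt{R_{n,k}}(\rho_A^{\otimes n}\otimes\id_{B^n})\sqrt{R_{n,k}}=(\sqrt{\rho_A^{\otimes n}}\otimes\id_{B^n})R_{n,k}(\sqrt{\rho_A^{\otimes n}}\otimes\id_{B^n})$. Summing over $k$, pulling the Haar average out of the blockwise-orthogonal sum $\sum_k R_{n,k}$, and using $\Gamma_{AB,k}^{\otimes n}=(\Pi_k\otimes\id)\Gamma_{AB}^{\otimes n}(\Pi_k\otimes\id)$ together with the block-diagonality of $\sqrt{\rho_A^{\otimes n}}$ in the photon number of $A^n$, one obtains that $\Lambda^{(n)}(\rho_A^{\otimes n})$ equals the Haar average over $u$ of the $(\id_{A^n}\otimes U_u^{\otimes n})$-conjugates of $\sum_k(\Pi_k\otimes\id)\ketbra{\psi_{\rho_A^{\otimes n}}}(\Pi_k\otimes\id)$. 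Finally, the cross terms $\sum_{k\neq k'}(\Pi_k\otimes\id)\ketbra{\psi_{\rho_A^{\otimes n}}}(\Pi_{k'}\otimes\id)$ are killed by the Haar average: each maps the photon-$k'$ subspace of $B^n$ into the photon-$k$ subspace, and for $k\neq k'$ these carry disjoint families of $\mathrm U(m)$-irreducibles, so Schur orthogonality annihilates the twirl; thus $\sum_k(\Pi_k\otimes\id)\ketbra{\psi_{\rho_A^{\otimes n}}}(\Pi_k\otimes\id)$ may be replaced by $\ketbra{\psi_{\rho_A^{\otimes n}}}$, and reorganising the $n$ copies gives~\eqref{eq:lambda_iid2}. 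The convergence of the series is handled in parallel: the partial sums $\sum_{k\le K}\sqrt{R_{n,k}}(\,\cdot\,\otimes\id_{B^n})\sqrt{R_{n,k}}$ are completely positive and trace-non-increasing (since $\sum_k\Tr_{B^n}R_{n,k}=\sum_k\Pi_k=\id_{A^n}$) and monotone in $K$ on positive inputs, hence converge in the topology of strong convergence to a channel, while on $\rho_A^{\otimes n}$ the partial sums are increasing trace-class operators of trace at most $1$, converging in trace norm to the operator computed above. I expect the only genuine difficulty to be the commutation $[R_{n,k},\rho_A^{\otimes n}\otimes\id_{B^n}]=0$: it fails outright with $\Gamma_{AB,k}^{\otimes n}$ in place of $R_{n,k}$, so the Haar average is essential, its role being exactly to push $R_{n,k}$ into the $G_1$-commutant — which by Howe duality is the algebra of $G_2$, under which $\rho_A^{\otimes n}$ is invariant.
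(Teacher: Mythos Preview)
Your proposal is correct and follows the same approach as the paper: establish $[R_{n,k},\rho_A^{\otimes n}\otimes\id_{B^n}]=0$ by combining Lemma~\ref{lem:bounded} with Lemma~\ref{passive_gaussian}, swap $\sqrt{R_{n,k}}$ and $\sqrt{\rho_A^{\otimes n}}$, and sum over $k$ to recover the standard purification inside the Haar average. Your explicit treatment of the cross terms $\sum_{k\neq k'}(\Pi_k\otimes\id)\ketbra{\psi_{\rho_A^{\otimes n}}}(\Pi_{k'}\otimes\id)$ via Schur orthogonality is in fact more careful than the paper's, which passes silently from $\sum_k\Gamma_{AB,k}^{\otimes n}$ to $\Gamma_{AB}^{\otimes n}$ inside the twirl.
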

\end{boxedthm}

In order to prove this theorem, we need a preliminary lemma.
\begin{lemma}\label{lem:commute} 
Given any passive Gaussian state $\rho_A$, for all $n,k$ the operator $R_{n,k}$ defined by~\eqref{eq:R_n} commutes with $\rho_A^{\otimes n}$. In particular,
\bb
    \sqrt{\rho_{A}^{\otimes n}}\, R_{n,k} \,\sqrt{\rho_{A}^{\otimes n}}
    = 
    \sqrt{R_{n,k}}\, \rho_{A}^{\otimes n}\, \sqrt{R_{n,k}}\, .
\ee
\end{lemma}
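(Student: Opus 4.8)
The goal is to show $[R_{n,k}, \rho_A^{\otimes n}] = 0$, after which the displayed identity follows from a standard functional-calculus fact: if $X$ is a positive operator commuting with a positive operator $Y$, then $\sqrt{Y}\,X\,\sqrt{Y} = \sqrt{X}\,Y\,\sqrt{X}$ (both equal $\sqrt{X}\sqrt{Y}\sqrt{X} = \sqrt{X}\sqrt{Y}\,\sqrt{X}$; more directly, on the joint spectral decomposition both sides are $\sum x_i y_j P_{ij}$). So I will spend the work on the commutation claim and mention the second statement only briefly.

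**Reducing to the commutant structure.** The key observation is that $R_{n,k}$ is, by construction, invariant under conjugation by $U_{\id_n\otimes u}^{\otimes n}$ acting on the $B^n$ system — more precisely, writing $\id_{A^n}\otimes U_u^{\otimes n}$ in the language of the group $G_1 = \{\id_n\otimes \mathrm{U}(m)\}$ (acting on the $nm$-mode systems $A^n$ and $B^n$ separately), the averaging in~\eqref{eq:R_n} over Haar-random $u\in\mathrm{U}(m)$ makes $R_{n,k}$ commute with $V\otimes V$ for every $V = U_u \in G_1$ on $A^n B^n$, where by $G_1$ I mean the passive unitaries of the form above. Indeed $(\id_{A^n}\otimes U_v^{\otimes n})\, R_{n,k}\, (\id_{A^n}\otimes U_v^{\dagger\otimes n})$ just reparametrizes the Haar average; and $R_{n,k}$ also commutes with $U_v^{\otimes n}\otimes \id_{B^n}$ because $\Gamma_{AB}^{\otimes n}$ intertwines $U_v^{\otimes n}$ on $A$ with $(U_v^{\dagger})^{\otimes n}$ on $B$ via the transpose trick, and this intertwining is absorbed into the Haar average $u\mapsto v^\dagger u$ (up to complex conjugation, exactly as in~\eqref{eq:purification_trick}). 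Hence $R_{n,k}$ commutes with $U_{u_A}^{\otimes n}\otimes U_{u_B}^{\otimes n}$ for all $u_A,u_B\in \mathrm{U}(m)$; equivalently $R_{n,k}$ lies in the commutant of $G_1\times G_1$ on $(\mathcal H_{m,n})^{\otimes 2}$. Moreover $R_{n,k}$ is supported on the subspace of total photon number exactly $2k$ (noted in the text), so $R_{n,k} = \Pi_{2k} R_{n,k}\Pi_{2k}$.

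**Applying Lemma~\ref{lem:bounded}.** With these two facts, Lemma~\ref{lem:bounded} (the ``in particular'' clause, with $k$ there equal to $2k$ here) tells us that $R_{n,k} \in \Span\{\Pi_{2k}\,(U_{u_B}\otimes U_{u_{B'}})\,\Pi_{2k} : u_B, u_{B'}\in G_2\}$, where $G_2 = \{\mathrm{U}(n)\otimes\id_m\}$, i.e.\ $G_2$ consists of the passive unitaries $U_{w\otimes \id_m}$ permuting/mixing the $n$ copies. So it suffices to check that each such generator $(U_{w\otimes\id_m}^{(A)}\otimes U_{w'\otimes\id_m}^{(B)})$ commutes with $\rho_A^{\otimes n}$, i.e.\ with $\rho_A^{\otimes n}\otimes \id_{B^n}$. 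But the $B$-factor acts trivially on $\rho_A^{\otimes n}$, and for the $A$-factor, Lemma~\ref{passive_gaussian} gives exactly $U_{w\otimes\id_m}\,\rho_A^{\otimes n}\,U_{w\otimes\id_m}^\dagger = \rho_A^{\otimes n}$ for every $w\in\mathrm{U}(n)$. Taking the span, $[R_{n,k},\rho_A^{\otimes n}\otimes\id_{B^n}]=0$, which is the claim.

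**Expected obstacle.** The routine calculations are harmless; the one point that needs care is the bookkeeping in the ``reducing to the commutant structure'' step: one must verify cleanly that the invariance of $R_{n,k}$ under conjugation by $B$-side passive unitaries (built into the Haar average) together with the transpose-trick intertwining really does upgrade to invariance under the \emph{direct} group $G_1\times G_1$ in the precise sense required by Lemma~\ref{lem:bounded}, and that the relevant ``total photon number'' block is $2k$ rather than $k$. Everything else — the functional-calculus identity and the invocation of Lemmas~\ref{passive_gaussian} and~\ref{lem:bounded} — is mechanical.
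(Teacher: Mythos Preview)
Your approach is essentially identical to the paper's: establish that $R_{n,k}$ commutes with $U_{u_A}^{\otimes n}\otimes U_{u_B}^{\otimes n}$ for all $u_A,u_B\in\mathrm{U}(m)$ via Haar invariance and the transpose trick, invoke Lemma~\ref{lem:bounded} to place $R_{n,k}$ in the span of $\Pi_{2k}\big(U_{w\otimes\id_m}^{(A)}\otimes U_{w'\otimes\id_m}^{(B)}\big)\Pi_{2k}$, and then use Lemma~\ref{passive_gaussian}.

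There is one small gap in your final step. You write that it suffices to check each generator $U_{w\otimes\id_m}^{(A)}\otimes U_{w'\otimes\id_m}^{(B)}$ commutes with $\rho_A^{\otimes n}\otimes\id_{B^n}$, but the elements in the span are sandwiched by $\Pi_{2k}$, and commutation of the unsandwiched operators with $\rho_A^{\otimes n}\otimes\id_{B^n}$ does not automatically pass to $\Pi_{2k}(\cdot)\Pi_{2k}$. You also need $[\Pi_{2k},\,\rho_A^{\otimes n}\otimes\id_{B^n}]=0$, which the paper verifies explicitly (their~\eqref{eq:commute_rho_pi}: write $\rho=U_{\bar u}\tau U_{\bar u}^\dagger$, use that passive unitaries preserve photon number and that thermal states are Fock-diagonal). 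This is precisely the photon-number bookkeeping you flagged as the expected obstacle; once you insert it, the proof is complete and matches the paper's.
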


\begin{proof} First, we note that $\rho^{\otimes n}$ commutes with $\Pi_k$. Indeed, writing $\rho = U_{\bar{u}}\tau U_{\bar{u}}^\dagger$ for a unitary matrix $\bar{u}\in {\rm U}(m)$, we have
\bb\label{eq:commute_rho_pi}
    \rho^{\otimes n}\Pi_k \eqt{(a)} U_{\bar{u}}^{\otimes n}\,\tau^{\otimes n}\, \Pi_k\,(U_{\bar{u}}^{\otimes n})^\dagger
    \eqt{(b)}U_{\bar{u}}^{\otimes n}\,\Pi_k\,\tau^{\otimes n}\, (U_{\bar{u}}^{\otimes n})^\dagger
    \eqt{(a)} \Pi_k\rho^{\otimes n},
\ee
where in (a) we have used that passive unitaries preserve the total photon number, and in (b) we have recalled that thermal states are diagonal in the Fock basis. Now, let us observe that $R_{n,k}$ commutes with all i.i.d.\ passive Gaussian unitaries of the form
$U_{\bar u}^{\otimes n} \otimes \id_{B^n}$, where $\bar u \in \mathrm{U}(m)$. To see this, note that
\bb\label{eq:Haar_invariance}
    (U_{\bar u}^{\otimes n} \otimes \id_{B^n})\, R_{n,k}\, (U_{\bar u}^{\otimes n} \otimes \id_{B^n})^\dagger &= \EE{u}\!\left[
        \big(U_{\bar u}^{\otimes n} \otimes U_u^{\otimes n}\big)\;
        \Gamma_{AB,k}^{\otimes n}\;
        \big((U_{\bar u}^\dagger)^{\otimes n} \otimes (U_u^\dagger)^{\otimes n}\big)
    \right] \\
    &\eqt{(i)} \EE{u}\!\left[
        \id_{A^n} \otimes \big(U_u U_{\bar u}^{\intercal}\big)^{\otimes n}\;
        \Gamma_{AB,k}^{\otimes n}\;
        \id_{A^n} \otimes \big(U_u U_{\bar u}^{\intercal}\big)^{\dagger \, \otimes n}
    \right] \\
    & \eqt{(ii)} R_{n,k} ,
\ee
where in step~(i) we have used the transpose trick, which works even if $\Gamma_{AB,k}^{\otimes n}$ is truncated, because $U_{\bar u}$ does not change the total photon number, and in step~(ii) we have exploited the right invariance of the Haar measure. Moreover, by the invariance of the Haar measure, we also have that \mbox{$\big[R_{n,k}, \id_{A^n} \otimes U_{\bar u}^{\otimes n}\big]=0$} for all $\bar u \in \mathrm{U}(m)$. Therefore,
\bb
    \big[R_{n,k}, U_{\bar u_A}^{\otimes n} \otimes U_{\bar u_B}^{\otimes n}\big]=0 \qquad \forall\, \bar u_A,\bar u_B \in \mathrm{U}(m).
\ee
As a consequence, we can apply Lemma~\ref{lem:bounded} and write
\bb\label{eq:span}
R_{n,k}\in\Span\left\{P_k \left(U^{A^n}_{u\otimes \mathbb{1}_m}\otimes U^{B^n}_{u'\otimes \mathbb{1}_m}\right) P_k:\ u,u'\in \mathrm{U}(n)\right\}\,,
\ee
 whence, leveraging Lemma~\ref{passive_gaussian} and \eqref{eq:commute_rho_pi}, we can conclude that $[R_{n,k},\rho^{\otimes n}]=0$ by \eqref{eq:span}.
\end{proof}


\begin{proof}[Proof of Theorem~\ref{thm:random_gaussian}.]
Let us first show that the map $\Lambda^{(n)}$ is well defined; indeed, for any trace-class operator $X_{A^n}$ the series defining $\Lambda^{(n)}(X_{A^n})$ is absolutely convergent. Without loss of generality, we can take $X_{A^n}$ to be positive semidefinite. Now, 
    \bb\label{eq_trace_class}
    \sum_{k=0}^\infty\left\|\sqrt{R_{n,k}}\,
    \big(\,X_{A^n}\, \otimes \id_{B^n}\big)\,
    \sqrt{R_{n,k}}\right\|_1
    &= \sum_{k=0}^\infty\Tr\left[\sqrt{R_{n,k}}\,
    \big(\,X_{A^n}\, \otimes \id_{B^n}\big)\,
    \sqrt{R_{n,k}}\right]\\
    &= \sum_{k=0}^\infty\Tr\left[R_{n,k}\,
    \big(\,X_{A^n}\, \otimes \id_{B^n}\big)\right]\\
    &= \sum_{k=0}^\infty\Tr_{A_n}\left[X_{A^n}\, \Pi_{k}^{A^n}\right]\\
    &= \Tr_{A_n}[X_{A^n}]\\
    &=\|X_{A^n}\|_1<\infty\,.
    \ee
By arbitrariness of $X_{A^n}$, \eqref{eq_trace_class} implies that the series defining the map $\Lambda^{(n)}$ converges with respect to the topology of strong convergence. Moreover, the calculation in Eq.~\eqref{eq_trace_class} also shows that $\Lambda^{(n)}$ is trace preserving. Since $\Lambda^{(n)}$ is also completely positive --- as it is sum of completely positive maps --- we conclude that it is a quantum channel.
Now, it only remains to prove \eqref{eq:lambda_iid2}. If $\rho_A$ is a passive Gaussian state, we immediately see that
    \bb\label{eq:channel_action}
        \Lambda^{(n)}\big(\rho_{A^{n}}^{\otimes n}\big)&=\sum_{k=0}^\infty\sqrt{R_{n,k}}\rho_{A}^{\otimes n}\sqrt{R_{n,k}}\\
        &\eqt{(iv)}\sum_{k=0}^\infty\sqrt{\rho_{A}^{\otimes n}}R_{n,k} \sqrt{\rho_{A}^{\otimes n}} \\
        &=\sum_{k=0}^\infty\EE{u}\!\left[
            (\id_{A^n} \otimes U_u^{\otimes n})\,
            \sqrt{\rho_{A}^{\otimes n}}\,\Gamma_{AB,k}^{\otimes n}\sqrt{\rho_{A}^{\otimes n}}\,
            (\id_{A^n} \otimes (U_u^\dagger)^{\otimes n})
        \right]\\
        &=\EE{u}\!\left[
            (\id_{A^n} \otimes U_u^{\otimes n})\,
            \sqrt{\rho_{A}^{\otimes n}}\,\Gamma_{AB}^{\otimes n}\sqrt{\rho_{A}^{\otimes n}}\,
            (\id_{A^n} \otimes (U_u^\dagger)^{\otimes n})
        \right] \\
        &\eqt{(v)} \EE{u}\!\left[
            (\id_{A^n} \otimes U_u)^{\otimes n}\,
            (\psi_\rho)_{AB}^{\otimes n}\,
            (\id_{A^n} \otimes U_u^\dagger)^{\otimes n}
        \right]
        \ee
    where in~(iv) we have used Lemma~\ref{lem:commute}  and in~(v) we have introduced the standard purification $(\psi_\rho)_{AB}$ of $\rho_{A}$ on $\mathcal{H}_A^{\otimes n}\otimes \mathcal{H}_B^{\otimes n}$ as in~\eqref{def_can_pur}. 
    The final step is to note, by Lemma~\ref{lemma_std_pur}, that the standard purification of a passive Gaussian state $\rho_A$ is still a Gaussian state. Moreover, thanks to Lemma~\ref{lemma_std_pur} and since passive Gaussian unitary preserves the mean photon number, it follows that the mean photon number of $(\id_A \otimes U_u)\,
                (\psi_\rho)_{AB}\,
                (\id_A \otimes U_u^\dagger)$ is twice the one of $\rho_A$.
\end{proof}

\section{Conclusion}
In this work, we extend the notion of the random purification channel to the setting of bosonic passive Gaussian states. We exhibit a quantum channel that maps $n$ copies of a passive Gaussian state to $n$ copies of the same randomly chosen Gaussian purification, where each purification has a mean photon number exactly twice that of the original state. Our construction exploits the structure of the commutant of bosonic passive Gaussian unitaries, relying in particular on a characterisation of operators that commute with tensor-power representations of passive Gaussian unitaries.

An application of our random passive Gaussian purification channel arises in quantum learning theory with bosonic systems, particularly in the context of tomography of bosonic Gaussian states~\cite{Mele_2025_nature_phys,fanizza2025efficient,Bittel_2025_tr_b,bittel2025energyindependent}. In the same spirit as~\cite{pelecanos2025}, our construction directly implies that the tomography of $n$-mode \emph{mixed} passive Gaussian states with mean photon number at most $N$ can be reduced to the tomography of $2n$-mode \emph{pure} Gaussian states with mean photon number at most $2N$.

\subsection*{Related work}
Closely related work has been done independently Michael Walter and Freek Witteveen~\cite{WalterWitteveen_2025}. We have arranged with these authors to coordinate the submission of our papers.

\subsection*{Acknowledgements} 
We are grateful to Lennart Bittel, Jens Eisert, Lorenzo Leone, Alfred Li, Zachary Mann, and Antonio Anna Mele for inspiring discussions, and to Michael Walter and Freek Witteveen for sharing a preliminary version of the draft of their work~\cite{WalterWitteveen_2025}. MF thanks Freek Witteveen also for earlier discussions on Howe duality. FG, FAM, and LL acknowledge financial support from the European Union (ERC StG ETQO, Grant Agreement no.\ 101165230). SC acknowledges funding provided by the Institute for Quantum Information and Matter, an NSF Physics Frontiers Center (NSF Grant PHY-2317110).

\bibliography{biblio}

@article{howe1989remarks,
  title={Remarks on classical invariant theory},
  author={Howe, Roger},
  journal={Transactions of the American Mathematical Society},
  volume={313},
  number={2},
  pages={539--570},
  year={1989}
}

@article{RevModPhys.84.711,
  title = {Dual pairing of symmetry and dynamical groups in physics},
  author = {Rowe, D. J. and Carvalho, M. J. and Repka, J.},
  journal = {Rev. Mod. Phys.},
  volume = {84},
  issue = {2},
  pages = {711--757},
  numpages = {0},
  year = {2012},
  month = {May},
  publisher = {American Physical Society},
  doi = {10.1103/RevModPhys.84.711},
  url = {https://link.aps.org/doi/10.1103/RevModPhys.84.711}
}

@article{rowe_simple_2011,
	title = {Simple unified proofs of four duality theorems},
	volume = {52},
	issn = {0022-2488},
	url = {https://doi.org/10.1063/1.3525978},
	doi = {10.1063/1.3525978},
	abstract = {Duality relationships between the irreps (irreducible representations) of pairs of distinct commuting groups, {\textbackslash}documentclass[12pt]\{minimal\}{\textbackslash}begin\{document\}{\textbackslash}G\_1{\textbackslash}end\{document\}G1 and {\textbackslash}documentclass[12pt]\{minimal\}{\textbackslash}begin\{document\}{\textbackslash}G\_2{\textbackslash}end\{document\}G2, on Hilbert spaces of interest have long played important roles in the atomic and nuclear shell models. In addition to the well-known Schur–Weyl duality, the most widely used duality relationships are the so-called: unitary–unitary, orthogonal–symplectic (i.e., noncompact symplectic), symplectic–symplectic (compact symplectics), and orthogonal–orthogonal dualities. Proofs of these dualities exist in the literature. But most of them are not readily accessible to physicists or give little insight into how they might be used in practice. This paper presents unified proofs of the above-mentioned dualities based on the explicit construction of states which are simultaneously of extreme weight for the actions of both {\textbackslash}documentclass[12pt]\{minimal\}{\textbackslash}begin\{document\}{\textbackslash}G\_1{\textbackslash}end\{document\}G1 and {\textbackslash}documentclass[12pt]\{minimal\}{\textbackslash}begin\{document\}{\textbackslash}G\_2{\textbackslash}end\{document\}G2. The proofs expressed in language familiar to physicists are simple, systematic, and provide useful insights.},
	number = {1},
	journal = {Journal of Mathematical Physics},
	author = {Rowe, D. J. and Repka, J. and Carvalho, M. J.},
	month = jan,
	year = {2011},
	note = {},
	pages = {013507},
}

@article{pelecanos2025,
  title={Mixed state tomography reduces to pure state tomography},
  author={Pelecanos, Angelos and Spilecki, Jack and Tang, Ewin and Wright, John},
  journal={Preprint arXiv:2511.15806},
  year={2025},
  url={https://arxiv.org/abs/2511.15806}
}

@article{tang2025,
  title={Conjugate queries can help},
  author={Tang, Ewin and Wright, John and Zhandry, Mark},
  journal={Preprint arXiv:2510.07622},
  year={2025},
  url={https://arxiv.org/abs/2510.07622}
}

@article{random_pur_simple,
      title={Random purification channel made simple}, 
      author={Girardi, Filippo and Mele, Francesco Anna and Lami, Ludovico},
      year={2025},
      journal={Preprint arXiv:2511.23451},
      url={https://arxiv.org/abs/2511.23451}
}

@misc{Utsumi2025,
      title={Quantum algorithms for {U}hlmann transformation}, 
      author={Utsumi, Takeru and Nakata, Yoshifumi and Wang, Qisheng and Takagi, Ryuji},
      year={2025},
      journal={Preprint arXiv:2509.03619},
      archivePrefix={arXiv},
      primaryClass={quant-ph},
      url={https://arxiv.org/abs/2509.03619}, 
}

@article{AMele2025,
      title={Optimal learning of quantum channels in diamond distance}, 
      author={Mele, Antonio Anna and Bittel, Lennart},
      year={2025},
      journal={Preprint arXiv:2512.10214},
      url={https://arxiv.org/abs/2512.10214} 
}

@article{WalterWitteveen_2025,
  title   = {A Random Purification Channel for Arbitrary Symmetries with Applications to Fermions and Bosons},
  author  = {Walter, Michael and Witteveen, Freek},
  journal = {Preprint arXiv:2512.15690},
  year    = {2025},
  archivePrefix = {arXiv},
  eprint  = {2512.15690},
  primaryClass = {quant-ph}
}

@article{fanizza2025efficient,
  title   = {Efficient {H}amiltonian, Structure and Trace Distance Learning of {G}aussian States},
  author  = {Fanizza, Marco and Rouz{\'e}, Cambyse and Stilck Fran{\c c}a, Daniel},
  journal = {Preprint arXiv:2411.03163},
  year    = {2025}
}

@article{bittel2025energyindependent,
  title   = {Energy-Independent Tomography of {G}aussian States},
  author  = {Bittel, Lennart and Mele, Francesco Anna and Eisert, Jens and Mele, Antonio Anna},
  journal = {Preprint arXiv:2508.14979},
  year    = {2025}
}

@article{Bittel_2025_tr_b,
   title={Optimal estimates of trace distance between bosonic {G}aussian states and applications to learning},
   volume={9},
   ISSN={2521-327X},
   url={http://dx.doi.org/10.22331/q-2025-06-12-1769},
   DOI={10.22331/q-2025-06-12-1769},
   journal={Quantum},
   publisher={Verein zur Forderung des Open Access Publizierens in den Quantenwissenschaften},
   author={Bittel, Lennart and Mele, Francesco Anna and Mele, Antonio Anna and Tirone, Salvatore and Lami, Ludovico},
   year={2025},
   month=jun, pages={1769} }

@article{Mele_2025_nature_phys,
   title={Learning quantum states of continuous-variable systems},
   volume={21},
   ISSN={1745-2481},
   url={http://dx.doi.org/10.1038/s41567-025-03086-2},
   DOI={10.1038/s41567-025-03086-2},
   number={12},
   journal={Nature Physics},
   publisher={Springer Science and Business Media LLC},
   author={Mele, Francesco Anna and Mele, Antonio Anna and Bittel, Lennart and Eisert, Jens and Giovannetti, Vittorio and Lami, Ludovico and Leone, Lorenzo and Oliviero, Salvatore F. E.},
   year={2025},
   month=nov, pages={2002–2008} }

@book{BUCCO,
  title={Quantum Continuous Variables: A Primer of Theoretical Methods},
  author={Serafini, Alessio},
  year={2017},
  publisher={CRC Press, Taylor \& Francis Group, Boca Raton, USA}
}

@article{passive,
  title = {Entangling Power of Passive Optical Elements},
  author = {Wolf, M. M. and Eisert, J. and Plenio, M. B.},
  journal = {Phys. Rev. Lett.},
  volume = {90},
  issue = {4},
  pages = {047904},
  numpages = {4},
  year = {2003}
}

@article{Shirokov2008,
  title={On approximation of infinite-dimensional quantum channels},
  author = {Shirokov, M. E. and Holevo, A. S.},
  journal={Probl. Pered. Inform.},
  volume={44},
  number={2},
  pages={3--22},
  year={2008},
  note={(English translation: Probl. Inf. Transm. 44(2):73--90, 2008)}
}

@article{BBM,
  title = {Quantum cryptography without {B}ell's theorem},
  author = {Bennett, C. H. and Brassard, G. and Mermin, N. D.},
  journal = {Phys. Rev. Lett.},
  volume = {68},
  issue = {5},
  pages = {557--559},
  numpages = {0},
  year = {1992},
  publisher = {American Physical Society},
  doi = {10.1103/PhysRevLett.68.557},
  url = {https://link.aps.org/doi/10.1103/PhysRevLett.68.557}
}

\appendix

\end{document}